
\documentclass[letterpaper, 10 pt, conference]{ieeeconf}  

\IEEEoverridecommandlockouts                              

\overrideIEEEmargins                                      



\usepackage{amsmath} 
\usepackage{amssymb}  
\usepackage{mathtools}

\usepackage{xcolor}
\newtheorem{lemma}{\bf Lemma}
\newtheorem{theorem}{\bf Theorem}
\newtheorem{definition}{\bf Definition}
\newtheorem{assumption}{\bf Assumption}

\usepackage{acronym}
\acrodef{LTI}{linear-time-invariant}
\acrodef{LFT}{linear fractional transformation}
\acrodef{SDP}{semidefinite program}
\acrodef{QMI}{quadratic matrix inequality}

\usepackage{tikz}
\usepackage{pgfplots}
\pgfplotsset{compat=1.18}

\usepackage{cite}

\title{\LARGE \bf
A System Parametrization for Direct Data-Driven Analysis and Control with Error-in-Variables
}

\author{Felix Brändle and Frank Allgöwer
\thanks{F. Allg\"ower is thankful that this work was funded by the Deutsche Forschungsgemeinschaft (DFG, German Research Foundation) under Germany’s Excellence Strategy -- EXC 2075 -- 390740016 and within grant AL
	316/15-1 – 468094890. F. Br\"andle thanks the International Max Planck Research School for Intelligent Systems (IMPRS-IS) for supporting him.
	}
\thanks{F. Br\"andle and F. Allg\"ower are with the University of Stuttgart, Institute for Systems
		Theory and Automatic Control, 70550 Stuttgart,
		Germany. (e-mail: \{felix.braendle, frank.allgower\}@ist.uni-stuttgart.de)}%
}


\begin{document}
	
\maketitle
\thispagestyle{empty}
\pagestyle{empty}

\begin{abstract}
	In this paper, we present a new parametrization to perform direct data-driven analysis and controller synthesis for the error-in-variables case.
	To achieve this, we employ the Sherman-Morrison-Woodbury formula to transform the problem into a \acl{LFT} with unknown measurement errors and disturbances as uncertainties.
	For bounded uncertainties, we apply robust control techniques to derive a guaranteed upper bound on the $\mathcal{H}_2$-norm of the unknown true system.
	To this end, a single \acl{SDP} needs to be solved, with complexity that is independent of the amount of data.
	Furthermore, we exploit the signal-to-noise ratio to provide a data-dependent condition, that characterizes whether the proposed parametrization can be employed.
	The modular formulation allows to extend this framework to controller synthesis with different performance criteria, input-output settings, and various system properties.
	Finally, we validate the proposed approach through a numerical example.
\end{abstract}

\section{INTRODUCTION}
In recent years, interest in system analysis and controller design based on collected data has been continuously increasing \cite{Coulson2019, Berberich2020, Martin2023a}.
Indirect data-driven control first identifies a model to employ model-based techniques for controller design and analysis as a second step.
In practice, only a finite number of possibly noisy data points are available, making identifying the true system challenging in many cases, even for the linear case \cite{Oymak2019}. 
This mismatch between the true and identified model may lead to instabilities or deteriorated performance guarantees when applying the controller to the true system.

A possible remedy is offered by direct data-driven control, where end-to-end guarantees can be provided even for a finite number of noisy data samples \cite{Romer2019}.
Here, available results are often restricted to bounded noise, which includes, e.g., energy bounds and individual noise bounds for each time instance.
Many results in the literature build on the data-informativity framework \cite{Waarde2020a, Waarde2022}, which provides stability and performance guarantees for the true system by verifying a desired property for all systems consistent with the data.
Another direct data-driven approach relies on parameterizing the set of consistent system matrices in terms of a right inverse of the stacked data matrices \cite{DePersis2020, Berberich2019}.

Most existing works consider the case of additive process noise \cite{Waarde2020a, Koch2020}.
Although additive process noise often appears in linear dynamical systems, it is possible to approximate nonlinear systems via basis functions and treat the resulting approximation error as an additive disturbance \cite{Martin2023a}. 
By deriving a bound on the approximation error, it is possible to provide guarantees for unknown nonlinear systems.

When collecting data, the occurring measurement noise introduces yet another error source leading to the error-in-variables problem.
Without explicitly addressing the measurement noise, the derived analysis often results in biases. 
For example, in the linear least squares problem, not considering error-in-variables results in a systematic underestimation of the absolute parameter values \cite{Soederstroem2003}.
To address this the total least squares problem must be solved.

In system identification, the bias can be eliminated based on set-membership estimation, where the set of consistent systems is identified \cite{Cerone2018}. 
Similarly, \cite{Miller2023c} designs a stabilizing controller using such an identified set. 
To achieve this, the authors reformulate the problem as a sum-of-squares problem, which can be solved iteratively using \acp{SDP}.
To design guaranteed stabilizing controllers for sufficiently small measurement noise, an alternative is to introduce a regularization term to a predictive controller based on Hankel matrices \cite{Berberich2022b}. 
This leads to a trade-off between a simple parametrization and reliance on data.
Moreover, \cite{DePersis2020} treats error-in-variables as additive process noise, which requires additional knowledge on the true system. 
However, this knowledge may not always be available.
Recent results extend the data-informativity setting to the error-in-variables case under the assumption that the process noise and the measurement error satisfy a particular quadratic matrix inequality \cite{Bisoffi2024} .

In this work, we derive an explicit and exact parametrization of all systems consistent with the data by interpreting the data-driven control problem as linear regression.
The main contribution of this paper is to generalize the results of \cite{DePersis2020, Koch2020} to the errors-in-variables problem. 
Moreover the proposed parameterization allows for a more flexible description of the noise affecting the data than the data-informativity framework, while still requiring the solution of only a single \ac{SDP}.
Furthermore, we exemplify the use of the derived parametrization to compute an upper bound on the $\mathcal{H}_2$-norm of a unknown system.
Finally, by formulating the parameterization in terms of a \ac{LFT}, many already existing methods from robust control can be applied to extend this setup for different performance criteria, noise descriptions and controller design.

\section{PRELIMINARIES}

We denote the $n\times n$ identity matrix and the $p\times q$ zero matrix as $I_n$ and $0_{p\times q}$, respectively.
We omit the indices if the dimensions are clear from the context. 
For the sake of space limitations, we use $[\star]$ if the corresponding matrix block can be inferred from symmetry. 
Moreover, we use $P\succ 0$ and $P\succeq 0$ if the symmetric matrix $P$ is positive definite or positive semidefinite. 
Negative (semi-)definiteness is defined by $P\prec 0$ and $P\preceq 0$, respectively. We use $\mathrm{diag}(V_1,V_2)$ to describe a block diagonal matrix with $V_1$ and $V_2$ on its diagonal.
Moreover, we use $\sigma_\mathrm{min}(A)$ and $\sigma_\mathrm{max}(A)$ to denote the minimal and maximal singular value of $A$.
For measurements $\{x_k\}_{k=0}^{N}$, we define the following matrices
\begin{equation*}
	X=\begin{bmatrix}
		x_0 & x_1 & \ldots & x_{N-1}
	\end{bmatrix}, 
	\text{ } X_{+}=\begin{bmatrix}
		x_1 & x_2 & \ldots & x_{N} 
	\end{bmatrix}.  		
\end{equation*}

In this paper, we consider matrices with unknown elements. In particular, we look at \acp{LFT}.
\begin{definition}
	The \acl{LFT}
	\begin{align}
		\begin{bmatrix}
			y \\ z
		\end{bmatrix} &= 
		\left[\begin{array}{cc}
			A & B \\
			C & D \\
		\end{array}\right]
		\begin{bmatrix}
			x \\ w 
		\end{bmatrix}, & w &= \delta z
	\end{align}
	is well-posed if the matrix $(I-D\delta)$ is non-singular. 
\end{definition}

In this work, we treat $\delta$ as an unknown matrix. 
Furthermore, a well-posed \ac{LFT} can equivalently be written as
\begin{equation}
	y = 	\left(A + B \delta \left(I-D\delta\right)^{-1}C\right) x. \label{eq:Pre:LFT}
\end{equation}

The next lemma is key in deriving an \ac{LFT} formulation in Section\,\ref{sec:Param}.
\begin{lemma} \label{lemma:param:Woodbury}
	Let $I+UV$ be invertible. Then the following equality holds
	\begin{equation}
		(I+UV)^{-1} = I-U(I+VU)^{-1}V.
	\end{equation}
\end{lemma}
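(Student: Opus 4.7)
The plan is to verify the identity by direct multiplication, which reduces the claim to a one-line algebraic cancellation. Before doing so, I would first check that the right-hand side is well defined, i.e.\ that $(I+VU)^{-1}$ exists. This follows from the hypothesis that $I+UV$ is invertible together with Sylvester's determinant identity $\det(I+UV) = \det(I+VU)$ (which holds even when $U$ and $V$ are non-square but of conforming dimensions, so that $UV$ and $VU$ may be of different sizes). Hence, whenever $I+UV$ is nonsingular, so is $I+VU$, and the expression $U(I+VU)^{-1}V$ is meaningful.

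With this in hand, I would compute
\begin{equation*}
(I+UV)\bigl(I - U(I+VU)^{-1}V\bigr) = I + UV - U(I+VU)^{-1}V - UVU(I+VU)^{-1}V,
\end{equation*}
and then factor the last two terms as $-U\bigl(I+VU\bigr)(I+VU)^{-1}V = -UV$, so that the whole expression collapses to $I$. This shows that $I - U(I+VU)^{-1}V$ is a right inverse of $I+UV$; since $I+UV$ is square and invertible by assumption, it is also a two-sided inverse, establishing the claim.

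The argument is entirely mechanical, and there is no real obstacle: the only subtlety worth flagging is the dimension bookkeeping between $UV$ and $VU$ and the use of Sylvester's identity to guarantee invertibility of the inner factor $I+VU$ from the stated hypothesis on $I+UV$.
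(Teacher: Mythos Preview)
Your argument is correct: Sylvester's determinant identity ensures $I+VU$ is invertible, and the direct multiplication collapses to $I$ exactly as you describe. The paper does not actually give a proof of this lemma; it merely remarks that the identity is a special case of the Sherman--Morrison--Woodbury formula and cites \cite{Sherman1950, Woodbury1950}, so your self-contained verification goes beyond what the paper provides.
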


This is a special case of the well-known Sherman–Morrison–Woodbury formula \cite{Woodbury1950}.

\section{DATA-DRIVEN SYSTEM PARAMETRIZATION}\label{sec:Param}

In this work, we consider a linear regression model of the form
\begin{equation}
	y = \Theta_{\mathrm{tr}}x,
\end{equation}
with regressand $y\in\mathbb{R}^p$, regressor $x\in\mathbb{R}^n$ and unknown true parameter matrix $\Theta_{\mathrm{tr}}\in\mathbb{R}^{p \times n}$.
Our goal is to find an \ac{LFT} parametrization of the true parameter matrix  $\Theta_{\mathrm{tr}}$ based on a finite number of measurements. 
To this end, we collect data $\{x_k, y_k\}_{k=0}^{N-1}$, where
\begin{align*}
	  x_k = x_{\mathrm{tr},k} + \tilde{v}_{1,k}, \quad y_k =  y_{\mathrm{tr},k} + \tilde{v}_{2,k}, \quad k=0,\ldots, N-1,
\end{align*}
with $x_{\mathrm{tr},k}$ and $y_{\mathrm{tr},k}$ being the true regressand and regressor.
For simplification, we rewrite the problem in matrix notation
\begin{equation}
	Y-\tilde{V}_2 = \Theta_{\mathrm{tr}}(X-\tilde{V}_1)	\label{eq:Param:GeneralRegr}
\end{equation}
with $Y\in\mathbb{R}^{p \times N}$ and $X\in\mathbb{R}^{n \times N}$ being the horizontally stacked regressor and regressand. To distinguish between the error sources, we denote the error term in the regressor $\tilde{V}_1 = L_1 V_1 R_1$ with measurement error and the error term in the regressand $\tilde{V}_2 = L_2 V_2 R_2$ by disturbance. The matrices $L_1$, $R_1$, $L_2$ and $R_2$ are introduced for a more flexible parametrization. For example to account for measurement errors in  only certain elements of the regressor one must set the corresponding rows of $L_1$ to zero. In contrast, methods using the data-informativity framework require an assumption on $\tilde{V}_1$ and $\tilde{V}_2$ directly \cite{Waarde2020a, Bisoffi2024}, hence this setup provides additional flexibility.

Using the collected data, we define the set $\Delta$ of all measurement errors and disturbances consistent with the data
\begin{align*}
	\Delta = \{(V_1,V_2)\in \Delta_0 \,|\,&\exists \Theta\in\mathbb{R}^{p \times n}\colon \\ 
	&Y-L_2V_2R_2  = \Theta(X-L_1V_1R_1) \},
\end{align*}
where $\Delta_0$ incorporates prior knowledge of the errors, e.g., bounds on the spectral norm \cite{Berberich2019}. 
The matrix $\Theta$ takes the role of the unknown parameter matrix.
Moreover, we define the set $\Sigma_{\Theta}$ of all parameter matrices consistent with the data as
\begin{align*}
	\Sigma_{\Theta} = \{\Theta\in\mathbb{R}^{p \times n}\,|\,&\exists\,(V_1,V_2)\in \Delta\colon\\
	&Y-L_2V_2R_2  = \Theta(X-L_1V_1R_1)\}.
\end{align*}
Furthermore, we assume the data satisfy the following Assumption.
\begin{assumption} \label{ass:Param:G}
	There exists a matrix $G\in\mathbb{R}^{N\times n}$ satisfying
	\begin{align}
		XG&=I,		\label{eq:Param:GRightInverse}\\
		\mathrm{rank}\left(I-L_1V_1R_1 G\right) &= n, \quad\forall(V_1,V_2)\in \Delta. \label{eq:Param:XTrRightInverse}
	\end{align}
\end{assumption}

Condition\,\eqref{eq:Param:GRightInverse} implies that $G$ is a right inverse of the regressor $X$, which requires $X$ to have full row rank. 
This condition can be checked using the data and imposes a condition on the data collection to contain sufficient information.
Condition\,\eqref{eq:Param:XTrRightInverse} is more difficult to validate since it depends on the unknown measurement error. 
To this end, we use \eqref{eq:Param:GRightInverse} to rewrite \eqref{eq:Param:XTrRightInverse} as
\begin{align*}
	I-L_1V_1R_1 G = (X-L_1V_1R_1)G.
\end{align*}
Condition \eqref{eq:Param:XTrRightInverse} is satisfied if $\sigma_\mathrm{min}(X)\geq\sigma_\mathrm{max}(L_1V_1R_1)$, i.e. $V_1$ is sufficiently small in comparison to $X$.
Consequently, one way to satisfy \eqref{eq:Param:XTrRightInverse} is via collecting data with a sufficiently large signal-to-noise-ratio. Similar assumptions can be found in \cite[Assumption 1]{Miller2023c} and \cite[Assumption 1]{Bisoffi2024} to ensure a sufficient signal-to-noise-ratio and hence compactness of the corresponding sets.

To find an explicit parametrization of $\Sigma_{\Theta}$, we define the following set
\begin{align*}
	\Sigma_{\Theta}^{G} = \{&\Theta\in\mathbb{R}^{p \times n}\,|\,\exists(V_1,V_2)\in \Delta\colon \\
	&\Theta = (Y-L_2V_2R_2) G\left(I-L_1V_1R_1 G\right)^{-1}\}, 
\end{align*}
for some matrix $G\in\mathbb{R}^{N\times n}$.

\begin{lemma} \label{lemma:param:SetEquality}
	Suppose $G\in\mathbb{R}^{N\times n}$ satisfies Assumption\,\ref{ass:Param:G}. Then, $\Sigma_{\Theta} = \Sigma_{\Theta}^{G}$.
\end{lemma}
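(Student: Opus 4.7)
The plan is to prove the two inclusions $\Sigma_\Theta \subseteq \Sigma_\Theta^G$ and $\Sigma_\Theta^G \subseteq \Sigma_\Theta$ separately, with both directions hinging on the same computation: right-multiplication of the regression identity by $G$, combined with $XG = I$ and invertibility of $I - L_1 V_1 R_1 G$ (which is an $n\times n$ matrix of rank $n$ by Assumption~\ref{ass:Param:G}).

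For $\Sigma_\Theta \subseteq \Sigma_\Theta^G$, I would take $\Theta \in \Sigma_\Theta$, pick a witnessing $(V_1,V_2) \in \Delta$ so that $Y - L_2 V_2 R_2 = \Theta (X - L_1 V_1 R_1)$, and right-multiply by $G$. Using $XG = I$ the right-hand side becomes $\Theta(I - L_1 V_1 R_1 G)$, and invertibility of $I - L_1 V_1 R_1 G$ lets me solve for $\Theta$ explicitly, yielding exactly the defining equation of $\Sigma_\Theta^G$.

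For the reverse inclusion, I would take $\Theta \in \Sigma_\Theta^G$ with some witnessing $(V_1,V_2) \in \Delta$. The subtlety here is that one cannot directly ``undo'' the right-multiplication by $G$, since $G \in \mathbb{R}^{N\times n}$ is not invertible when $N>n$. I would bypass this by using the definition of $\Delta$ itself: membership $(V_1,V_2) \in \Delta$ guarantees the existence of some $\Theta' \in \mathbb{R}^{p\times n}$ with $Y - L_2 V_2 R_2 = \Theta'(X - L_1 V_1 R_1)$. Applying the computation of the previous paragraph to $\Theta'$ shows $\Theta' = (Y - L_2 V_2 R_2) G (I - L_1 V_1 R_1 G)^{-1}$, which equals $\Theta$ by the assumed formula. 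Substituting back gives $Y - L_2 V_2 R_2 = \Theta(X - L_1 V_1 R_1)$, placing $\Theta$ in $\Sigma_\Theta$.

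The only potential stumbling block is exactly this asymmetry: the map $\Theta \mapsto \Theta G$ is not injective in general, so the $\supseteq$ direction is not a cosmetic rearrangement but genuinely relies on invoking the existence clause inside $\Delta$ to produce a candidate $\Theta'$, then using the invertibility of $I - L_1 V_1 R_1 G$ to force $\Theta' = \Theta$. Once this is recognized, both inclusions collapse to the same short identity, and no further machinery is needed.
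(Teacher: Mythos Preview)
Your proposal is correct and follows essentially the same approach as the paper: both directions hinge on right-multiplying the regression identity by $G$, using $XG=I$, and exploiting invertibility of $I-L_1V_1R_1G$; and for the reverse inclusion both you and the paper invoke the existence clause in the definition of $\Delta$ to produce a $\Theta'$ satisfying the regression identity, then argue it coincides with $\Theta$. If anything, your write-up of the $\supseteq$ step is cleaner: you make the identification $\Theta'=\Theta$ explicit via the invertibility of $I-L_1V_1R_1G$, whereas the paper packages the same idea through a combined matrix equation and leaves that identification slightly implicit.
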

\begin{proof}
	 In the following, we show the equivalence of both parametrizations by considering each set inclusion separately.
	
	$\Sigma_{\Theta}\subseteq\Sigma_{\Theta}^{G}$:
	Take any $\Theta\in\Sigma_{\Theta}$ and let $(V_1,V_2)\in \Delta$ be the corresponding error terms. Hence, we directly deduce
	\begin{equation}
		Y-L_2V_2R_2 = \Theta(X-L_1V_1R_1).
	\end{equation}
    Post-Multiplication with $G$ and applying \eqref{eq:Param:GRightInverse} yield
	\begin{equation}
		(Y-L_2V_2R_2)G = \Theta \left(I-L_1V_1R_1 G\right).
	\end{equation}
	Then, we exploit \eqref{eq:Param:XTrRightInverse} to deduce invertibility of $(I-L_1V_1R_1G)$, hence $\Sigma_\Theta\subseteq\Sigma_{\Theta}^{G}$. 
	
	$\Sigma_{\Theta}^{G}\subseteq\Sigma_{\Theta}$:
	Let $(V_1,V_2)\in \Delta$ be an error and $\theta$ be the corresponding parameter matrix. We need to show that
	\begin{equation}
		\Theta \left(I-L_1V_1R_1 G\right) = (Y-L_2V_2R_2)G
	\end{equation}
	holds, while 
	\begin{equation}
		\Theta (X-L_1V_1R_1) = (Y-L_2V_2R_2).
	\end{equation}
	is also satisfied. Combining both equations results in
	\begin{equation*}
		\Theta \begin{bmatrix}
			X-L_1V_1R_1 &
			I-L_1V_1R_1 G
		\end{bmatrix} = (Y-L_2V_2R_2)\begin{bmatrix} I & G\end{bmatrix}.
	\end{equation*}
	Using \eqref{eq:Param:GRightInverse}, this is equivalent to
	\begin{equation}
		\Theta (X-L_1V_1R_1)
		\begin{bmatrix}	
			I & G
		\end{bmatrix} = (Y-L_2V_2R_2)\begin{bmatrix} I & G\end{bmatrix}.
	\end{equation}
	By definition of $\Delta$, this has a solution, ensuring $\Theta$ has the desired structure and is consistent with the data. Hence $\Sigma_\Theta^G\subseteq\Sigma_{\Theta}$.
\end{proof}

Lemma\,\ref{lemma:param:SetEquality} provides an exact parametrization of the uncertain model. The main advantage of using $\Sigma_{\Theta}^{G}$ is that it offers an explicit description, while $\Sigma_{\Theta}$ is an implicit description. 
A similar parameterization in terms of a right inverse $G$ is presented in \cite{Berberich2019,DePersis2020,Koch2020} without measurement error.

In the next step, we reformulate the explicit parametrization $\Sigma_{\Theta}^{G}$ as an \ac{LFT} to describe $\Sigma_{\Theta}$.
\begin{theorem} \label{theorem:param:LFT}
	Suppose $G\in\mathbb{R}^{N\times n}$ satisfies Assumption\,\ref{ass:Param:G}. Then, $\Sigma_{\Theta}$ can be described by the well-posed \ac{LFT}
	\begin{align}
		\left[\begin{array}{c}
			y \\ \hline z_{1} \\ z_{2}
		\end{array}\right]
		&=
		\left[ \begin{array}{c|cc}
			YG &    YG L_1 & -L_2\\ \hline
		  R_1G &  R_1G L_1 &    0\\  
		  R_2G &  R_2G L_1 &	0 
		\end{array}\right]
		\left[\begin{array}{c}
			x \\ \hline w_{1} \\ w_{2}
		\end{array}\right], \label{eq:param:FullLFT}\\ 
		\begin{bmatrix}
			w_{1} \\
			w_{2}
		\end{bmatrix} &= \mathrm{diag}(V_1,V_2)\begin{bmatrix}z_{1} \\ z_{2}\end{bmatrix},\qquad
		 	(V_1,V_2)\in \Delta.
	\end{align}
\end{theorem}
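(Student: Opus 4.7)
The plan is to verify the theorem in two stages: first establish well-posedness of the LFT for every $(V_1,V_2)\in\Delta$, and then apply formula \eqref{eq:Pre:LFT} to compute the map from $x$ to $y$ in closed form, manipulating it via Lemma \ref{lemma:param:Woodbury} into the explicit parametrization $\Sigma_\Theta^G$. The conclusion then follows by invoking Lemma \ref{lemma:param:SetEquality}.

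For well-posedness, I would set $\delta=\mathrm{diag}(V_1,V_2)$ and observe that, because the second block-column of the $D$-matrix is zero, $I-D\delta$ has the block lower-triangular form
\begin{equation*}
I-D\delta=\begin{bmatrix} I-R_1GL_1V_1 & 0\\ -R_2GL_1V_1 & I\end{bmatrix},
\end{equation*}
so its invertibility reduces to invertibility of $I-R_1GL_1V_1$. Applying Lemma \ref{lemma:param:Woodbury} with $U=-L_1V_1$ and $V=R_1G$ (and using that $I+UV$ is invertible iff $I+VU$ is), this is equivalent to invertibility of $I-L_1V_1R_1G$, which is guaranteed by condition \eqref{eq:Param:XTrRightInverse} of Assumption \ref{ass:Param:G}.

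For the equivalence, I would use \eqref{eq:Pre:LFT} to write $y=(A+B\delta(I-D\delta)^{-1}C)x$, invert the triangular block above, and multiply out. After cancellation, the effective parameter matrix realized by the LFT becomes
\begin{equation*}
\Theta_{\mathrm{LFT}}=YG+YGL_1V_1(I-R_1GL_1V_1)^{-1}R_1G-L_2V_2R_2G-L_2V_2R_2GL_1V_1(I-R_1GL_1V_1)^{-1}R_1G.
\end{equation*}
The key step is then to apply Lemma \ref{lemma:param:Woodbury} in the form $(I-L_1V_1R_1G)^{-1}=I+L_1V_1(I-R_1GL_1V_1)^{-1}R_1G$ to collect the two summands multiplying $YG$ and those multiplying $L_2V_2R_2G$, yielding $\Theta_{\mathrm{LFT}}=(Y-L_2V_2R_2)G(I-L_1V_1R_1G)^{-1}$. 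This matches the defining expression of $\Sigma_\Theta^G$ exactly, and since $(V_1,V_2)$ ranges over $\Delta$ in both descriptions, Lemma \ref{lemma:param:SetEquality} gives $\Sigma_\Theta=\Sigma_\Theta^G$ equal to the set parametrized by the LFT.

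The principal obstacle is bookkeeping: the block inverse couples the $V_1$-loop into the $V_2$-channel through the $(2,1)$ block of $D$, and keeping the non-commuting products $L_1V_1R_1G$ and $R_1GL_1V_1$ in the right order is what makes the Sherman-Morrison-Woodbury identity applicable. Once that identification is made, the entire statement reduces to the already-established equivalence in Lemma \ref{lemma:param:SetEquality}, so no further inequalities or dimension counts are needed.
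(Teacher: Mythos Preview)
Your proposal is correct and uses essentially the same ingredients as the paper's proof: the block lower-triangular structure of $I-D\delta$ for well-posedness, Lemma~\ref{lemma:param:Woodbury} to pass between $(I-L_1V_1R_1G)^{-1}$ and $I+L_1V_1(I-R_1GL_1V_1)^{-1}R_1G$, and Lemma~\ref{lemma:param:SetEquality} to conclude. The only cosmetic difference is direction---the paper starts from the explicit description $\Sigma_\Theta^G$ and constructs the auxiliary signals $z_1,z_2,w_1,w_2$ to arrive at the LFT, whereas you start from the LFT, expand $A+B\delta(I-D\delta)^{-1}C$, and collapse it back to $\Sigma_\Theta^G$; both routes are equivalent and equally rigorous.
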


\begin{proof}
	First, we apply Lemma \ref{lemma:param:SetEquality} to get an equivalent explicit description of $\Sigma_\Theta$. Next, using \eqref{eq:Param:XTrRightInverse} and Lemma\,\ref{lemma:param:Woodbury}, we arrive at
	\begin{equation*} 
		\Theta = (Y-L_2V_2R_2) G
		\left( I + L_1V_1\left(I-R_1GL_1 V_1\right)^{-1}R_1G\right).
	\end{equation*}
	Furthermore, we define
	\begin{align}
		&z_1 = \left(I-R_1GL_1 V_1\right)^{-1}R_1G x, & w_1 = V_1 z_1,
	\end{align}
	and compare with \eqref{eq:Pre:LFT} to get
	\begin{equation*}
		\begin{bmatrix}
			y \\  z_{1}
		\end{bmatrix}
		=
		\left[ \begin{array}{cc}
			(Y-L_2V_2R_2)G & (Y-L_2V_2R_2)GL_1\\ 
			R_1 G &               R_1 G L_1\\  
		\end{array}\right]
		\begin{bmatrix}
			x \\ w_{1}
		\end{bmatrix}.
	\end{equation*}
	For
	\begin{align}
		&z_{2} = R_2G x + R_2GL_1w_{1}, &
		w_{2} = V_2 z_{2},
	\end{align}
	we receive \eqref{eq:param:FullLFT}. Well-posedness follows directly from
	\begin{equation*}
		I - \begin{bmatrix} R_1GL_1 & 0\\ R_2GL_1 & 0	\end{bmatrix} \begin{bmatrix}
			V_1 & 0 \\
			0 & V_2
		\end{bmatrix} = \begin{bmatrix}
			I-R_1GL_1V_1 & 0 \\
			-R_2GL_1V_1 & I
		\end{bmatrix}
	\end{equation*}
	and \eqref{eq:Param:XTrRightInverse} with Lemma\,\ref{lemma:param:Woodbury} implying invertibility.
\end{proof}

Theorem\,\ref{theorem:param:LFT} provides an exact parametrization of all para\-meter matrices consistent with the data and $G$ as a free decision variable. Furthermore, it offers an interesting relation to system identification methods. 
In particular, for the nominal case without any disturbance or measurement error $	Y = \Theta_{\mathrm{tr}}X$, it is possible to reconstruct $\Theta_{\mathrm{tr}}=YG$ by taking any $G$ satisfying \eqref{eq:Param:GRightInverse}. 
Thus, \eqref{eq:param:FullLFT} can be interpreted as the identified nominal model for $V_1$ and $V_2$ being zero. 
Hence, for small errors, we recover the nominal case.

Furthermore, there is a clear difference how $V_1$ and $V_2$ affect the system. Similar to the results in \cite{Berberich2019} and  \cite{Koch2020}, the \ac{LFT} in Theorem\,\ref{theorem:param:LFT} includes an explicit description of the uncertainty channel for $V_2$, since $z_2$ is independent of $w_2$.
Meanwhile, the channel $z_1$ depends on $w_1$, resulting in an implicit description to describe the inverse. 
Future research may also investigate the term $Gx+GL_1w_{1}$, which appears in both the description of $y$ and the uncertainty channel, possibly leading to meaningful insights into the error-in-variables case.

\section{$\mathcal{H}_2$-Analysis} \label{sec:Analysis}

Now, we demonstrate the usage of Theorem\,\ref{theorem:param:LFT} to derive an upper bound for the $\mathcal{H}_2$-norm of an unknown discrete-time \ac{LTI} system. 
\begin{definition}
	The $\mathcal{H}_2$-norm of an asymptotically stable, discrete-time transfer function $P$ is
	\begin{equation}
		||P(e^{\mathrm{j}\omega}))||_2 = \sqrt{\frac{1}{2\pi} \int\limits_{-\pi}^{\pi} \mathrm{trace}(P(e^{\mathrm{j}\omega})^*P(e^{\mathrm{j}\omega}))\,\mathrm{d}\omega}
	\end{equation}
	with $^*$ denoting the conjugate transpose.
\end{definition}

The squared $\mathcal{H}_2$-norm coincides with the total output-energy of the impulse response \cite{Scherer2000}.
We employ convex relaxations and multiplier techniques from robust control to compute an upper bound of the $\mathcal{H}_2$-norm.
To this end, we derive an upper bound for all systems within $\Sigma_{\Theta}^G$, which also establishes a guaranteed upper bound for the true system.

Stating the parametrization as an \ac{LFT} is advantageous, because it allows to use a plethora of already existing methods from robust control to solve different control problems. In this paper, we focus on the $\mathcal{H}_2$-norm, but to design controllers only the control input and output have to be included in the regressor and regressand, see \cite{Weiland1994} for more details. Nonlinear control can be addressed by including nonlinear basis functions in the regressor \cite{Strasser2021}. Despite focusing on \acp{QMI} in this work, different uncertainty descriptions like integral-quadratic-constraints or polytopes are also possible.

In this section, we consider an \ac{LTI} system of the form
\begin{equation}
	\left[ \begin{array}{c}
		x_{k+1} \\ \hline
		z_{p,k}
	\end{array} \right] = 
	\left[ \begin{array}{c|cc}
		A_{\mathrm{tr}}   & B_{p,\mathrm{tr}}  & B_{\tilde{d}} \\ \hline
		C_{p,\mathrm{tr}}   & D_{p,\mathrm{tr}}  & 0
	\end{array} \right] 
	\left[ \begin{array}{c}
		x_{k} \\ \hline
		w_{p,k} \\
		\tilde{d}_{k}
	\end{array} \right], \label{eq:Anal:LTIPerformance}
\end{equation}
where $x_k\in\mathbb{R}^n$ is the state vector, $w_{p,k}\in\mathbb{R}^{m_p}$ is the performance input, $\tilde{d}_k\in\mathbb{R}^{m_d}$ is an external process disturbance, and $z_{p,k}\in\mathbb{R}^{p_p}$ is the performance output at time step $k\geq 0$. We aim to find an upper bound on the $\mathcal{H}_2$-norm from $w_p$ to $z_p$.

The true system is described by the unknown matrices $A_{\mathrm{tr}}$, $B_{p,\mathrm{tr}}$, $C_{p,\mathrm{tr}}$, $D_{p,\mathrm{tr}}$ and the known matrix $B_{\tilde{d}}$, which is used to include additional prior knowledge on the influence of the process disturbance $\tilde{d}_k$. Moreover, we assume measurement errors in the obtained data samples $x_k$ and $z_{p,k}$. 
To illustrate the benefit of including $L_1$ and $R_1$, we select $w_{p}$ to excite the system.
As we can choose the corresponding signal, we know the true value of $w_{p}$.
This known structure on the measurement noise can be included by setting the last $m_p$ rows of $L_1$ to $0$.
In contrast, \cite{Bisoffi2024} requires that $w_{p}$ is not known exactly. The matrix notation of the system dynamics reads
\begin{equation}
	\left[\begin{array}{c} X_+ - \tilde{V}_{X_+} \\ Z_p - \tilde{V}_{Z_p}\end{array}\right] = 	
	\Theta_{\mathrm{tr}}\left[\begin{array}{c}	X-\tilde{V}_X \\ W_p \end{array}\right]	
	+\left[\begin{array}{c} B_{\tilde{d}} \tilde{D} \\ 0\end{array}\right] ,
\end{equation}
where $\Theta_{\mathrm{tr}}$ contains the stacked unknown, true system matrices. 
Interestingly, the measurement errors in $X_+$ and the process disturbance $\tilde{D}$ can be treated the same way, since we only consider the next time step. 

Since for any combination of $Y$, $X$ and $\theta$, it is possible to explain the collected data with appropriate measurement errors and disturbances, we impose bounds on the error terms.
We assume that $\tilde{V}_X=\tilde{L}_{V_X}V_X\tilde{R}_{V_X}$ is bounded and can be described by a known \ac{QMI} \cite{Berberich2019}
\begin{align*}
	V_X &\in \left\{V_X\mid\star^\top \left[\begin{array}{cc}
	Q_{V_X} & 0 \\ 0 & R_{V_X}\end{array}\right]
	\left[\begin{array}{c}
		V_X-V_{X,0} \\ I 
	\end{array}\right] \succeq 0 \right\}
\end{align*}
with $R_{V_X}\succ 0$ and $Q_{V_X}\prec 0$. 
For simplicity, we consider only the case $V_{X,0}=0$, otherwise $V_{X,0}$ can be moved in the regressor or regressand.
Similar characterizations are made for $\tilde{D}$, $\tilde{V}_{X_+}$ and $\tilde{V}_{Z_p}$. 
Now, we rewrite the system as in \eqref{eq:Param:GeneralRegr} with $[X^\top,W_p^\top]^\top$ and $[X_+^\top,Z_p^\top]^\top$ taking the role of regressor and regressand, $V_1=V_X$ and $V_2=\mathrm{diag}(V_{X_+},V_{Z_p},D)$. 
Note that repeated entries in $V_X$ and $V_{X_+}$ can be represented using extended multiplier classes, as in \cite{Berberich2020}.
For space reasons, we omit these extended multipliers and consider $V_X$ and $V_{X_+}$ to be independent.
Furthermore, we have
\begin{align*}
	L_1 &= \left[\begin{array}{c} \tilde{L}_{V_X} \\ 0	\end{array}\right],&
	L_2 = \left[\begin{array}{ccc} \tilde{L}_{V_{X_+}} & 0 & B_d\tilde{L}_D\\ 0 & \tilde{L}_{V_{Z_p}} & 0\end{array}\right], \\
	R_1 &= \tilde{R}_{V_X},&
	R_2 = \left[\begin{array}{c}
		\tilde{R}_{V_{X_+}}\\ \tilde{R}_D\\ \tilde{R}_{Z_p}
	\end{array}\right]
\end{align*}
and
\begin{align*}
	&\Delta_0 = \left\{(V_1,V_2)\mid \left[\begin{array}{c}\star\\\star\end{array}\right]^\top P \left[\begin{array}{cc}\mathrm{diag}(V_1,V_2) \\ I\end{array}\right]\succeq0\right\}, \\
	&P=\mathrm{diag}(Q_{V_X},Q_{V_{X_+}},Q_{V_{Z_p}},Q_D,R_{V_X},R_{V_{X_+}},R_{V_{Z_p}},R_D).
\end{align*}
This formulation can include free decision variables in $P$ to reduce conservatism, i.e., scaling the \ac{QMI} of each error source by a positive scalar, since this won't change the uncertainty set.
Next, we discuss suitable convex relaxations to ensure the problem is solvable by standard robust control techniques. 
First, we overapproximate the set $\Delta$ by
\begin{equation}
	\tilde{\Delta}=\left\{\delta \mid \left[\begin{array}{c}\delta \\ I\end{array}\right]^\top P \left[\begin{array}{cc}\delta \\ I\end{array}\right]\succeq0\right\}.
\end{equation}
In particular, we relax the restriction that the set of errors must be consistent with the data as in \cite[Section IV]{Koch2020}, hence we only consider a superset.
This is done in order to employ robust control techniques, which requires an \ac{QMI} formulation.
Furthermore, we do not require the uncertainty to be block diagonal, which is common practice, when combining multiple uncertainties by stacking the corresponding multipliers diagonally \cite[Section 6.3]{Weiland1994}. 
An interesting topic for future research is how to choose $\tilde{\Delta}$ as close as possible to $\Delta$ to reduce the introduced conservatism, e.g., by including structured singular values to enforce the block diagonal structure.

Now, we can apply Theorem\,\ref{theorem:param:LFT} to obtain the following \ac{LFT}
\begin{align}\label{eq:Anal:LFT}
	\left[\begin{array}{c}
		x_{k+1} \\\hline z_{p,k} \\ z_k
	\end{array}\right] &= 
	\left[\begin{array}{c|cc}
		A &    B_1 &    B_2 \\\hline
		C_1 &  D_{1} & D_{12} \\
		C_2 & D_{21} &    D_2 
	\end{array}\right]
	\left[\begin{array}{c}
		x_k \\\hline w_{p,k} \\ w_k
	\end{array}\right], \\
	w_k &= \delta z_k, \qquad \delta\in\tilde{\Delta},  \nonumber
\end{align}
with system matrices depending on the regressand $Y$, and the right inverse $G$, such that the true system can be represented by a particular $\delta\in\tilde{\Delta}$.

As already mentioned before $G$ is a decision variable, which can be employed to reduce conservatism as demonstrated in \cite{DePersis2020} without error-in-variables.
Finding an optimal $G$ to minimize the $\mathcal{H}_2$-norm can be reformulated as a standard control problem by extracting $G$ from \eqref{eq:param:FullLFT} to get
\begin{align*}
	\left[\begin{array}{c}
		y \\ \hline z_{1} \\ z_{2} \\ \hline z
	\end{array}\right]
	&=
	\left[ \begin{array}{c|cc|c}
		0 &   0 & -L_2 & Y \\ \hline
		0 &   0 &    0 & R_1\\  
		0 &   0 &	 0 & R_2\\ \hline
		I & L_1 &    0 & 0
	\end{array}\right]
	\left[\begin{array}{c}
		x \\ \hline w_{1} \\ w_{2} \\ \hline u
	\end{array}\right] 
	&u = Gz.
\end{align*}
This corresponds to the static output feedback case with $G$ as feedback gain and is not convexifiable in general \cite{Sadabadi2016}.
The case for $L_1=0$, i.e., no error-in-variables leads to static state-feedback, which can be convexified as seen in \cite{Koch2020}.
In the following, we choose a fixed $G$ and refer to \cite{Sadabadi2016} to include an iterative scheme to optimize for $G$.
According to \cite[Section 4]{Braendle2024}, a suitable way to choose $G$ is
\begin{align}
	\hat{G} &= R^{-1}X^\top(XR^{-1}X^\top)^{-1}, \label{eq:Analysis:Gopt}\\
	R &= [\star]^\top \mathrm{diag}(R_{V_X},R_{V_{X_+}},R_{V_{Z_p}},R_D) \begin{pmatrix} R_1 \\ R_2\end{pmatrix} \nonumber
\end{align}
and $X$ being the regressor to minimize the volume of $\tilde{\Delta}$. Note that due to the additional shift $YG$, this may not coincide with the optimal $G$ to minimize the $\mathcal{H}_2$-norm and requires $[R_1^\top \, R_2^\top]^\top$ to have full column rank.
Another potential choice for $G$ is the Moore-Penrose inverse, due to being the right inverse with the smallest norm \cite{Penrose1956}.
This leads to small values of $YG$ and a small amplification of the uncertainty.
Next, we introduce a Theorem to compute an upper bound on the $\mathcal{H}_2$.
\begin{theorem} \label{theorem:Analysis:RobustH2}
	The \ac{LFT} \eqref{eq:Anal:LFT} is well-posed and $\gamma$ is a guaranteed upper bound on the $\mathcal{H}_2$-norm, if there exist $\mathcal{X}\succ0$, $\mathcal{Z}\succ0$, $P_1$, and $P_2$ such that
	\begin{equation}
		\left[\begin{array}{c}\delta \\ I \end{array}\right]^\top P_i \left[\begin{array}{c}\delta \\ I\end{array}\right] \succeq 0, \qquad \forall \delta\in\tilde{\Delta}, i=1,2,
	\end{equation}
	and 
	\begin{align*}
		\mathrm{tr}(\mathcal{Z}) < \gamma^2,
	\end{align*}
	\begin{align*}
		\left[\begin{array}{cc}
			\star & \star\\
			\star & \star\\ \hline
			\star & \star	\\
			\star & \star \\\hline
			\star & \star
		\end{array}\right]^\top	
		&\left[\begin{array}{cc|c|c}
			-\mathcal{X} &           0 &   0 &   0 \\
			0 & \mathcal{X} &   0 &   0 \\\hline
			0 &           0 & P_1 &   0 \\\hline            
			0 &           0 &   0 &   I \\
		\end{array}\right]
		\left[\begin{array}{cc}
			I & 0\\
			A & B_2\\\hline
			0 & I	\\
			C_2 & D_2 \\\hline
			C_1 & D_{12}
		\end{array}\right]
		\prec 0, \\
		\left[\begin{array}{cc}
			\star & \star\\
			\star & \star\\ \hline
			\star & \star	\\
			\star & \star \\\hline
			\star & \star
		\end{array}\right]^\top	
		&\left[\begin{array}{cc|c|c}
			-\mathcal{Z} &           0 &   0 &   0 \\
			0 & \mathcal{X} &   0 &   0 \\\hline
			0 &           0 & P_2 &   0 \\\hline            
			0 &           0 &   0 &   I \\
		\end{array}\right]
		\left[\begin{array}{cc}
			0 & I\\
			B_2 & B_1\\\hline
			I & 0	\\
			D_2 & D_{21} \\\hline
			D_{12} & D_{1}
		\end{array}\right]
		\prec 0.
	\end{align*}	
\end{theorem}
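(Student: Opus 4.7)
The plan is to invoke the full-block S-procedure of robust control (see, e.g., \cite{Scherer2000,Weiland1994}) to convert each of the two matrix inequalities into a robust dissipation inequality that holds for every admissible $\delta\in\tilde{\Delta}$, and then combine the resulting closed-loop observability and $\mathcal{H}_2$-cost bounds with $\mathrm{tr}(\mathcal{Z})<\gamma^2$. Well-posedness of the \ac{LFT}\,\eqref{eq:Anal:LFT} would follow from the $(2,2)$-block of the first matrix inequality, which forces $\begin{bmatrix}I\\D_2\end{bmatrix}^\top P_1 \begin{bmatrix}I\\D_2\end{bmatrix}\prec 0$. If $(I-\delta D_2)w=0$ for some nonzero $w$, then setting $z=D_2 w$ gives $\begin{bmatrix}I\\D_2\end{bmatrix}w=\begin{bmatrix}\delta\\I\end{bmatrix}z$, so the admissibility condition on $\delta$ makes the associated quadratic form nonnegative, contradicting strict negativity. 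Hence the closed-loop matrices $(A_{cl},B_{cl,p},C_{cl,p},D_{cl,p})$ mapping $w_p$ to $z_p$ are well defined via \eqref{eq:Pre:LFT}.

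Viewed as a quadratic form in $(x,w)$, the first matrix inequality is strictly negative. For any $(x,w)$ satisfying $w=\delta z=\delta(C_2 x+D_2 w)$ (i.e., with $w_p=0$), one has $\begin{bmatrix}0&I\\C_2&D_2\end{bmatrix}\begin{bmatrix}x\\w\end{bmatrix}=\begin{bmatrix}\delta\\I\end{bmatrix}z$, so the $P_1$-term contributes a nonnegative scalar that can be dropped. Substituting the closed-loop expression for $w$ then yields
\begin{equation*}
	A_{cl}^\top \mathcal{X} A_{cl}-\mathcal{X}+C_{cl,p}^\top C_{cl,p}\prec 0,
\end{equation*}
which certifies Schur stability of $A_{cl}$ and $\mathcal{X}\succ\mathcal{X}_{\mathrm{obs}}$, the observability Gramian of $(A_{cl},C_{cl,p})$. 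Applying the same procedure to the second inequality as a quadratic form in $(w,w_p)$ (i.e., with $x=0$) and substituting $w=\delta(D_2 w+D_{21}w_p)$ gives $B_{cl,p}^\top \mathcal{X} B_{cl,p}+D_{cl,p}^\top D_{cl,p}\prec\mathcal{Z}$.

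Combining the two bounds via the standard discrete-time $\mathcal{H}_2$ Gramian formula then yields
\begin{equation*}
	||G_{cl}||_2^2=\mathrm{tr}(B_{cl,p}^\top \mathcal{X}_{\mathrm{obs}} B_{cl,p}+D_{cl,p}^\top D_{cl,p})\le\mathrm{tr}(B_{cl,p}^\top \mathcal{X} B_{cl,p}+D_{cl,p}^\top D_{cl,p})<\mathrm{tr}(\mathcal{Z})<\gamma^2.
\end{equation*}
Because this bound holds uniformly in $\delta\in\tilde{\Delta}$ and Theorem\,\ref{theorem:param:LFT} guarantees that the true $\Theta_{\mathrm{tr}}$ is realized by some such $\delta$, the same bound applies to the true unknown system. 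The main obstacle will be to carefully align the two applications of the S-procedure — first with $w_p=0$ and the block $\begin{bmatrix}0&I\\C_2&D_2\end{bmatrix}$, second with $x=0$ and the block $\begin{bmatrix}I&0\\D_2&D_{21}\end{bmatrix}$ — with the block structure of $P_1$ and $P_2$, while acknowledging the conservatism of dropping the multiplier term at each stage.
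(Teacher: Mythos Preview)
Your proposal is correct and is precisely the standard robust-control argument the paper invokes: the paper's own proof is a one-line reference to \cite[Theorem~10.6]{Scherer2000}, and your sketch spells out exactly that argument (full-block S-procedure to eliminate the multiplier term on the uncertainty channel, yielding the closed-loop Lyapunov/observability inequality from the first LMI and the $B_{cl,p}^\top\mathcal{X}B_{cl,p}+D_{cl,p}^\top D_{cl,p}\prec\mathcal{Z}$ bound from the second, followed by the discrete-time Gramian trace formula). The well-posedness argument via the $(2,2)$-block is also the standard one and is sound.
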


\begin{proof}
	The proof follows standard robust control arguments, compare, e.g. \cite[Theorem 10.6]{Scherer2000}.
\end{proof}
If $P_i$ $i=1,2$ are parameterized affinely in the decision variables, the conditions in Theorem \ref{theorem:Analysis:RobustH2} become an \ac{SDP}, which  can be solved by standard numerical solvers. 
Moreover, it is possible to directly minimize $\gamma^2$, while solving the \ac{SDP} to obtain a potentially small upper bound on the true $\mathcal{H}_2$-norm. 
One possibility to guarantee satisfaction of Assumption\,\ref{ass:Param:G} is to apply persistently exciting inputs with a sufficiently large signal-to-noise ratio \cite{Willems2005}. 
Note that Theorem \ref{theorem:Analysis:RobustH2} already guarantees well-posedness of the \ac{LFT}, hence providing a verifiable condition to check \eqref{eq:Param:XTrRightInverse}. 
Existence of $G$ can be validated by checking that the regressor has full rank.
Note that the size and number of decision variables of the \ac{SDP} grows with the dimension of states, performance inputs, and $w_k$.
To reduce the dimension of the uncertainty the following lemma can be used.
\begin{lemma}[\cite{Braendle2024}] \label{lemma:Ana:GTrick}
	Suppose $R\succ 0$ and $E$ is a matrix with full column rank, then
	\begin{align*}
		&\left\{ \delta E \mid [\star]^\top \begin{bmatrix} - Q & 0 \\ 0 & R	\end{bmatrix} \begin{bmatrix} \delta \\ I\end{bmatrix} \succeq 0 \right\} \\
		= &\left\{\tilde{\delta} \mid [\star]^\top \begin{bmatrix} - Q & 0 \\ 0 & E^\top R E	\end{bmatrix} \begin{bmatrix} \tilde{\delta} \\ I\end{bmatrix} \succeq 0\right\}.
	\end{align*}
\end{lemma}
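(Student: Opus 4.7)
The plan is to prove the set equality by establishing the two inclusions separately. For the forward inclusion $\subseteq$, given any $\delta$ in the first set and setting $\tilde\delta := \delta E$, the identity
$$E^\top R E - \tilde\delta^\top Q \tilde\delta = E^\top(R - \delta^\top Q \delta) E$$
immediately places $\tilde\delta$ in the second set by preservation of positive semi-definiteness under congruence. This direction is routine and uses nothing beyond the definition of the sets.

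The nontrivial direction is $\supseteq$: given $\tilde\delta$ with $E^\top R E - \tilde\delta^\top Q \tilde\delta \succeq 0$, I need to produce an extension $\delta$ satisfying the first QMI with $\delta E = \tilde\delta$. Since $E$ has full column rank, I would first use a QR decomposition (or SVD) of $E$ to reduce, via an orthogonal change of basis applied to $\delta$ and $R$ together with absorption of the invertible right factor into $\tilde\delta$, to the canonical case $E = \begin{bmatrix} I_k \\ 0 \end{bmatrix}$. Partitioning $R$ conformably as $R = \begin{bmatrix} R_{11} & R_{12} \\ R_{21} & R_{22} \end{bmatrix}$ with $R_{11}\succ 0$ (since $R\succ 0$), the hypothesis becomes $R_{11} - \tilde\delta^\top Q \tilde\delta \succeq 0$.

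The key construction is the explicit extension $\delta := \begin{bmatrix} \tilde\delta & \tilde\delta R_{11}^{-1} R_{12} \end{bmatrix}$, which manifestly satisfies $\delta E = \tilde\delta$. A direct block manipulation then yields the congruence
$$R - \delta^\top Q \delta = L \begin{bmatrix} R_{11} - \tilde\delta^\top Q \tilde\delta & 0 \\ 0 & S_R \end{bmatrix} L^\top,$$
where $L := \begin{bmatrix} I & 0 \\ R_{21} R_{11}^{-1} & I \end{bmatrix}$ and $S_R := R_{22} - R_{21} R_{11}^{-1} R_{12}$ is the Schur complement of $R_{11}$ in $R$. Since $R \succ 0$ forces $S_R \succ 0$, and the $(1,1)$-block is PSD by hypothesis, the right-hand side is congruent to a block-diagonal PSD matrix and hence PSD, establishing membership in the first set.

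The main obstacle is identifying the right ansatz for the extension: a naive choice such as $\delta_2 = 0$ fails because the off-diagonal block $R_{12}$ generally couples the two partitions, and without the compensating term $\tilde\delta R_{11}^{-1} R_{12}$ the resulting matrix need not be PSD. Once the correct ansatz is in hand, the verification collapses to the block-$LDL^\top$ identity above and a standard appeal to positivity of the Schur complement of $R$; the hypothesis $R \succ 0$ is used exactly here to ensure both invertibility of $R_{11}$ (so the extension is well-defined) and positivity of $S_R$.
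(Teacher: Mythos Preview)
Your proof is correct. The paper does not supply its own proof of this lemma; it simply cites \cite{Braendle2024}, so there is nothing to compare against. Your argument---reducing via QR to the canonical case $E=\begin{bmatrix}I_k\\0\end{bmatrix}$ and then constructing the explicit extension $\delta=\begin{bmatrix}\tilde\delta & \tilde\delta R_{11}^{-1}R_{12}\end{bmatrix}$, verified through the block-$LDL^\top$ identity and positivity of the Schur complement $S_R$---is a clean, self-contained route that fills in what the paper leaves to the reference.
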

Since the number of columns in $G$ is fixed by the state and input dimension, using Lemma\,\ref{lemma:Ana:GTrick} with $E=[R_1^\top\,R_2^\top]^\top G$ changes the uncertainty to have at most $n+m_p$ columns.
For the case of $\mathcal{H}_2$-norm analysis, this won't reduce the size of the \ac{SDP} itself, but for state-feedback synthesis, the size of the \ac{SDP} increases with the size of $\delta$ \cite{Berberich2019}.


\section{NUMERICAL EXAMPLE}\label{sec:Example}
In this section, we validate the results of Section \ref{sec:Analysis} with a numerical example. To this end, we consider the system as described in \eqref{eq:Anal:LTIPerformance} with the following system matrices
\begin{align*}
	A_{\mathrm{tr}} &= \begin{bmatrix}
		          1 &   0.2 &    0 &   0 \\
		         -1 &   0.5 &  0.6 & 0.3 \\
		          0 &     0 &    1 & 0.2 \\
		        0.3 &  0.15 & -0.3 & 0.85
	\end{bmatrix}, 
	B_{p,\mathrm{tr}} = \begin{bmatrix}
		0 & 0 \\
		0.2 & 0\\
		0 & 0\\
		0 & 0.1
	\end{bmatrix},\\		
	B_{\tilde{d}} &= \begin{bmatrix}	0 \\ 0 \\ 0 \\ 0.2	\end{bmatrix}, 
	C_{p,\mathrm{tr}} = \begin{bmatrix} 1 & 0 & 0 & 0\\ 0 & 0 & 1 & 0	\end{bmatrix},
	D_{p,\mathrm{tr}} = \begin{bmatrix} 0 & 0 \\ 0 & 0\end{bmatrix}.
\end{align*}
We collect data $\{x_k\}_{k=0}^{N-1}$ and $\{w_{p,k},z_{p,k}\}_{k=0}^{N-2}$ for different data lengths $N\in [n+m_{p}+1,300]$. For $N$ smaller than $n+m_{p}+1$, $[X^\top,W_p^\top]^\top$ will have more rows than columns. Thus Assumption\,\ref{ass:Param:G} is violated. To further illustrate the benefit of parameterizing $\tilde{V}_1=L_1V_1R_1$ and $\tilde{V}_2=L_2V_2R_2$, we assign  a random constant value $\tilde{d}\in[-\bar{d},\bar{d}]$ with $\bar{d}=0.01$ to the disturbance $\tilde{d}$, leading to the description:
\begin{align}
	\tilde{D}&= d\begin{bmatrix}1 & 1 & \cdots& 1	\end{bmatrix},  \\
	d&\in\left\{d\mid\begin{pmatrix}d \\ 1\end{pmatrix}^\top 
	\begin{pmatrix}	-1 & 0 \\ 0 & \bar{d}^2	\end{pmatrix}
	\begin{pmatrix} d \\1\end{pmatrix} 
	\geq 0\right\}.
\end{align}
This results in a single scalar uncertainty, in contrast to parameterizing $\tilde{d}_k$ for each time step and restricting them to be equal as in \cite{Berberich2020}. Furthermore, we assume known bounds $\sigma_\mathrm{max}(\tilde{V}_X)^2\leq \bar{v}_x^2 (N-1)$, $\sigma_\mathrm{max}(\tilde{V}_{X_+})^2\leq \bar{v}_x^2(N-1)$  and $\sigma_\mathrm{max}(\tilde{V}_{Z_p})^2\leq \bar{v}_{Z_p}^2(N-1)$ with $\bar{v}_x=\bar{v}_{Z_p}=5\cdot 10^{-4}$.
This approximately achieves a constant signal-to-noise ratio \cite[Section II.C]{Berberich2020} and satisfaction of Assumption\,\ref{ass:Param:G}.
Note that each \ac{QMI} can be multiplied by a positive scalar without changing the corresponding set.
These additional multipliers are included as optimization variables to reduce conservatism.
The chosen error description results in an \ac{SDP}, whose size and number of decision variables are independent of the amount of samples $N$. 
We repeat each experiment $1000$ times by uniformly sampling the initial condition from $[-1,1]^4$ and $w_{p,k}\in [-1,1]^2$.
The unknown signals $\tilde{V}_X$, $\tilde{V}_{X_+}$, $\tilde{V}_{Z_p}$ and $\tilde{D}$ are randomly sampled within the described sets.
In the following, we compare two different choices for the right inverse $\hat{G}$ as in \eqref{eq:Analysis:Gopt} and the Moore-Penrose-inverse $G^\dagger$.

In Fig.\,\ref{fig:Example:Statistic}, we illustrate the relative error $\epsilon_G = \frac{|\gamma-\gamma_{\mathrm{tr}}|}{\gamma_{\mathrm{tr}}}$ of the upper bound with respect to the $\mathcal{H}_2$-norm $\gamma_{\mathrm{tr}}$ of the true system and the success rate $\alpha$ of solving the \ac{SDP}. 
Furthermore, we compare our approach to bounding the term $W\coloneq [A_{\mathrm{tr}}^\top,\:C_{\mathrm{p,tr}}^\top]^\top \tilde{L}_{V_X}\tilde{V}_X$ by $\sigma_\mathrm{max}(W)\leq \sigma_\mathrm{max} ([A_{\mathrm{tr}}^\top,\:C_{\mathrm{p,tr}}^\top]^\top \tilde{L}_{V_X}) \sigma_\mathrm{max}(\tilde{V}_X)$ to treat the measurement error as a disturbance similar to \cite{DePersis2020}.
We denote the results with the index $0$.
For small $N$, the upper bound is significantly larger than $\gamma_{\mathrm{tr}}$. 
For increasing $N$, the difference shrinks to about $20\%$ of the actual value.
This is the expected behavior, as an increasing number of data points also shrinks the size of the set $\Sigma_{\Theta}$ and its overapproximation.
The remaining error is due to the applied convex relaxations. 
Furthermore, in \cite[Corollary 3]{Braendle2024} it is shown, that even without error-in-variables, $\Sigma_{\Theta}$ only converges to the true system, if the error bound is tight and additionally the disturbance realization is of a particular structure.
Moreover, we observe that transforming the measurement error to a disturbance yields a larger upper bound.
This is expected, since this parameterization does not account for the particular structure of the measurement error.
Note that, there is  also a difference between $\hat{G}$ and $G^\dagger$. 
For sufficiently large $N$ $\hat{G}$ appears to give better results. 
This is expected due to $\hat{G}$ being chosen to minimize the size of the uncertainty set, hence leading to less conservatism.
For small $N$ this difference is not that conclusive.
During simulations small $N$  lead to different uncertainty realizations with \acp{LFT} close to the stability boundary, which causes large $\gamma$.
Increasing the number of experiments also did not yield conclusive evidence, which right inverse should be preferred.
Hence, it's unclear whether it's better to choose $G$ to minimize $YG$ to be farther away from the stability boundary or to reduce the size of the uncertainty set.

	\begin{figure}[t]\centering
		\includegraphics[width = 1.00\linewidth]{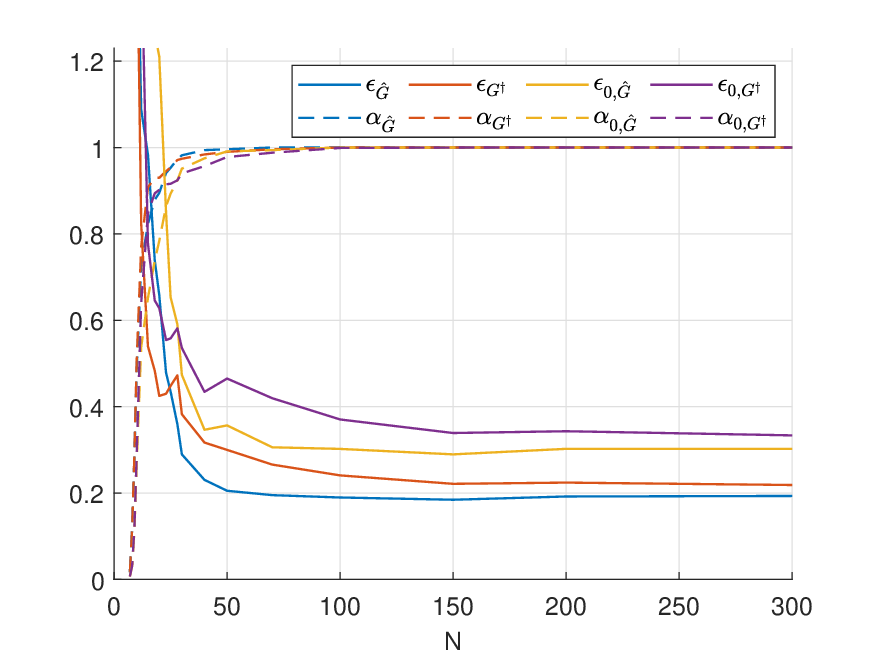}
		\caption{Error ratio of the robust $\mathcal{H}_2$-norm and the ratio of feasible \ac{SDP} in Theorem\,\ref{theorem:Analysis:RobustH2}, depending on the data length $N$, averaged over $1000$ experiments for different right inverses.}
		\label{fig:Example:Statistic}
	\end{figure}

At last, we analyze the feasibility of the \ac{SDP}. 
Fig.\,\ref{fig:Example:Statistic} shows the ratio of how often the \ac{SDP} is feasible compared to the total number of experiments.
This leads to similar results as before. 
For small sample sizes, the problem is rarely feasible, even though the \ac{LFT} is well-posed. 
This is due to the $\mathcal{H}_2$-norm requiring asymptotic stability of the system.
It becomes feasible with increasing $N$ such that an upper bound can be computed.
Similarly to before, this may be explained by the trade-off between small $YG$ and small uncertainty set.
However, a detailed theoretical or empirical case study is beyond the scope of this paper and is left for future research.

\section{CONCLUSION}
We have introduced a new \ac{LFT} parametrization for the linear regression problem with error-in-variables, exploiting the Sherman-Morrison-Woodbury formula.
Moreover, we demonstrated how to employ this data-driven parametrization to derive a guaranteed upper bound on the $\mathcal{H}_2$-norm of an unknown linear system using a finite number of noisy input-state measurements by solving an \ac{SDP}.
The presented framework allows to neatly incorporate many different setups, as designing controllers or input-output settings.
Future research should extend this framework, for example, by reducing the size of the uncertainty introduced by measurement errors and disturbances.

\bibliographystyle{IEEEtran}
\bibliography{Literature}

\end{document}